\let\set\mathbbm
\def\deg{\operatorname{deg}}
\def\lc{\operatorname{lc}}
\def\lclm{\operatorname{lclm}}
\def\<#1>{\langle#1\rangle}
\newtheorem{example}{Example}
\newtheorem{definition}[example]{Definition}
\newtheorem{theorem}[example]{Theorem}
\newtheorem{lemma}[example]{Lemma}
\begin{document}

\title{Desingularization of Ore Operators}

\author{Shaoshi Chen}
\address{Shaoshi Chen, KLMM, AMSS, Chinese Academy of Sciences, 100190 Beijing, China}
\email{schen@amss.ac.cn}

\author{Manuel Kauers}
\address{Manuel Kauers, Research Institute for Symbolic Computation, J. Kepler University Linz, Austria}
\email{mkauers@risc.jku.at}

\author{Michael F. Singer}
\address{Michael F. Singer, Department of Mathematics, North Carolina State University, Raleigh, NC, USA}
\email{singer@math.ncsu.edu}

\thanks{S.C. was supported by the NSFC grant 11371143 and a 973 project (2011CB302401), M.K. was supported by FWF grant Y464-N18, and M.F.S. was supported by NSF grant CCF-1017217.}

\subjclass{68W30, 33F10}

\keywords{D-finite functions, Apparent Singularities, Computer Algebra}

\begin{abstract}
  We show that Ore operators can be desingularized by calculating a least
  common left multiple with a random operator of appropriate order. Our result
  generalizes a classical result about apparent singularities of linear
  differential equations, and it gives rise to a surprisingly simple
  desingularization algorithm.
\end{abstract}

 \maketitle

\section{Introduction}

Consider a linear ordinary differential equation, like for example
\[
  x(1-x) f'(x) - f(x) = 0.
\]
The leading coefficient polynomial $x(1-x)$ of the equation is of
special interest because every point $\xi$ which is a singularity of some
solution of the differential equation is also a root of this polynomial.
However, the converse is in general not true. In the example above, the
root $\xi=1$ indicates the singularity of the solution $x/(1-x)$, but there
is no solution which has a singularity at the other root $\xi=0$. To see this,
observe that after differentiating the equation, we can cancel (``remove'')
the factor~$x$ from it. The result is the higher order equation
\[
   (1-x)f''(x)-2f'(x)=0,
\]
whose solution space contains the solution space of the original equation. Such
a calculation is called \emph{desingularization.} The factor $x$ is said to be
\emph{removable.}

Given a differential equation, it is of interest to decide which factors of its
leading coefficient polynomial are removable, and to construct a higher order
equation in which all the removable factors are removed. A classical algorithm,
which is known since the end of the 19th century~\cite{schlesinger95,ince26}, proceeds by taking
the least common left multiple of the given differential operator with a
suitably constructed auxiliary operator. This algorithm is summarized in
Section~\ref{ssec:lclm-differential} below. At the end of the 20th century, the
corresponding problem for linear recurrence equations was studied and algorithms
for identifying removable factors have been found and their relations to
``singularities'' of solutions have been investigated~\cite{abramov99b,abramov03,abramov06b}. Also some
steps towards a unified theory for desingularization of Ore operators have been
made~\cite{chyzak10,chen13}. 
Possible connections to Ore closures of an operator ideal have been noted in~\cite{chyzak10} 
and within the context of order-degree curves~\cite{chen13,chen12c,chen12b}.
These will be further developed in a future paper.

Our contribution in the present article is a three-fold generalization of the
classical desingularization algorithm for differential equations. Our main
result (Theorem~\ref{thm:main} below) says that (a)~instead of the particular
auxiliary operator traditionally used, almost every other operator of
appropriate order also does the job, (b)~also the case when a multiple root of
the leading coefficient can't be removed completely but only its multiplicity
can be reduced is covered, and (c)~the technique works not only for differential
operators but for every Ore algebra.

For every removable factor~$p$ there is a smallest $n\in\set N$ such that
removing $p$ from the operator requires increasing the order of the operator by
at least~$n$. Classical desingularization algorithms compute for each factor~$p$
an upper bound for this~$n$, and then determine whether or not it is possible to
remove $p$ at the cost of increasing the order of the operator by at most~$n$.
In the present paper, we do not address the question of finding bounds on~$n$
but only discuss the second part: assuming some $n\in\set N$ is given as part of
the input, we consider the task of removing as many factors as possible without
increasing the order of the operator by more than~$n$. Of course, for Ore
algebras where it is known how to obtain bounds on~$n$, these bounds can be
combined with our result.

Recall the notion of Ore algebras~\cite{Ore33}. Let $K$ be a
field of characteristic zero. Let $\sigma\colon K[x]\to K[x]$ be a ring
automorphism that leaves the elements of $K$ fixed, and let $\delta\colon
K[x]\to K[x]$ be a $K$-linear map satisfying the law
$\delta(uv)=\delta(u)v+\sigma(u)\delta(v)$ for all $u,v\in K[x]$.  The algebra
$K[x][\partial]$ consists of all polynomials in $\partial$ with coefficients
in~$K[x]$ together with the usual addition and the unique (in general
noncommutative) multiplication satisfying $\partial
u=\sigma(u)\partial+\delta(u)$ for all $u\in K[x]$ is called an \emph{Ore
  algebra.} The field $K$ is called the \emph{constant field} of the algebra.
Every nonzero element $L$ of an Ore algebra $K[x][\partial]$ can be written uniquely
in the form
\[
 L = \ell_0 + \ell_1\partial + \cdots + \ell_r\partial^r
\]
with $\ell_0,\dots,\ell_r\in K[x]$ and $\ell_r\neq0$. We call
$\deg_\partial(L):=r$ the \emph{order} of~$L$ and $\lc_\partial(L):=\ell_r$ the
\emph{leading coefficient} of~$L$. Roots of the leading coefficient $\ell_r$ are
called singularities of~$L$.  Prominent examples of Ore algebras are the algebra
of linear differential operators (with $\sigma=\mathrm{id}$ and $\delta=\frac d{dx}$;
we will write $D$ instead of $\partial$ in this case) and the algebra of
linear recurrence operators (with $\sigma(x)=x+1$ and $\delta=0$; we will write $S$
instead of $\partial$ in this case).

We shall suppose that the reader is familiar with these definitions and facts,
and will make free use of well-known facts about Ore algebras, as explained, for
instance, in~\cite{Ore33, bronstein96,abramov05}. In particular, we will make use of the notion of
least common left multiples (lclm) of elements of Ore algebras: $L\in
K(x)[\partial]$ is a \emph{common left multiple} of $P,Q\in K(x)[\partial]$
if we have $L=UP=VQ$ for some $U,V\in K(x)[\partial]$, it is called a
\emph{least common left multiple} if there is no common left multiple of lower
order. Least common left multiples are unique up to left-multiplication by
nonzero elements of~$K(x)$. By $\lclm(P,Q)$ we denote a least common left
multiple whose coefficients belong to $K[x]$ and share no common divisors
in~$K[x]$. Note that $\lclm(P,Q)$ is unique up to (left-)multiplication by
nonzero elements of~$K$. Efficient algorithms for computing least common left
multiples are available~\cite{bostan12b}.

\section{The Differential Case} \label{ssec:lclm-differential}

In order to motivate our result, we begin by recalling the classical results
concerning the desingularization of linear differential operators. See the appendix
of~\cite{abramov06b} for further details on this case.

Let $L=\ell_0+\ell_1D+\cdots+\ell_rD^r\in K[x][D]$ be a differential operator of
order~$r$. Consider the power
series solutions of~$L$.  It can be shown that $x\nmid\ell_r$ if and only if $L$
admits $r$ power series solutions of the form $x^\alpha+\cdots$, for
$\alpha=0,\dots,r-1$. Therefore, if $x\mid\ell_r$, then this factor is removable if
and only if there exists some left multiple $M$ of~$L$, say with
$\deg_\partial(M)=s$, such that $M$ admits a power series solution with minimal exponent~$\alpha$
for every $\alpha=0,\dots,s-1$. This is the case if and only if $L$ has
$r$~linearly independent power series solutions with integer exponents
$0\leq\alpha_1<\alpha_2<\dots<\alpha_r$, because in this case (and only in this case) we can
construct a left multiple $M$ of $L$ with power series solutions~$x^\alpha+\cdots$
for each $\alpha=0,\dots,\max\{\alpha_1,\dots,\alpha_r\}-1$, by adding power series of
the missing orders to the solution space of~$L$.

These observations suggest the following desingularization algorithm for
operators $L\in K[x][\partial]$ with $x\mid\lc_\partial(L)$. First find the set
$\{\alpha_1,\dots,\alpha_\ell\}\subseteq\set N$ of all exponents~$\alpha_i$ for which there
exist power series solutions $x^{\alpha_i}+\cdots$. If $\ell<r$, return ``not desingularizable'' and
stop. Otherwise, let $m=\max\{\alpha_1,\dots,\alpha_\ell\}$ and let
$e_1,e_2,\dots,e_{m-\ell}$ be those nonnegative integers which are at most~$m$ but
not among the~$\alpha_i$. Return the operator
\[
  M = \lclm(L, A),
\]
where
\[
  A := \lclm(xD-e_1,\ xD-e_2,\ \dots,\ xD-e_{m-\ell}).
\]
Note that among the solutions of $A$ there are the monomials $x^{e_1},x^{e_2},\dots,x^{e_{m-\ell}}$,
and that the solutions of $M$ are linear combinations of solutions of $A$ and solutions of~$L$.
Therefore, by the choice of the $e_j$ and the remarks made above, $M$~is desingularized.

\begin{example}\label{ex:diff} Consider the operator
  \[
    L = (x-1) (x^2-3 x+3) x D^2
       -(x^2-3) (x^2-2 x+2) D
       +(x-2) (2 x^2-3 x+3)\in K[x][D].
  \]
  This operator has power series solutions with minimal exponents $\alpha=0$ and $\alpha=3$.
  Their first terms are
  \begin{alignat*}1
    &1 + x + \tfrac12x^2 -\tfrac18x^4 -\tfrac{19}{120}x^5 -\tfrac{119}{720}x^6 + \cdots,\\
    &x^3 + x^4 + x^5 + x^6 + \cdots.
  \end{alignat*}
  The missing exponents are $e_1=1$ and $e_2=2$. Therefore we take
  \[
    A:=\lclm(xD-1,xD-2)=x^2 D^2 - 2x D + 2
  \]
  and calculate
  \begin{alignat*}1
    M = \lclm(L,A) &=
  (x^5-2 x^4+4 x^3-9 x^2+12 x-6) D^4\\
&-(x^5-2 x^4+x^3-12 x^2+24 x-24) D^3\\
&-(3 x^3+9 x^2) D^2
+(6 x^2+18 x) D
-(6 x+18).
  \end{alignat*}
  Note that we have $x\nmid\lc_\partial(M)$, as predicted.
\end{example}

In the form sketched above, the algorithm applies only to the
singularity~$0$. In order to get rid of a different singularity, move this
singularity to $0$ by a suitable change of variables, then proceed as described
above, and after that undo the change of variables. Note that by removing the
singularity~$0$ we will in general introduce new singularities at other points.

\section{Removable Factors}

We now turn from the algebra of linear differential operators to arbitrary Ore algebras.
In the general case, removability of a factor of the leading coefficient is defined as follows.

\begin{definition}\label{def:removable}
Let $L\in K[x][\partial]$ and let $p\in K[x]$ be such that
$p\mid\lc_\partial(L)\in K[x]$. We say that $p$ is \emph{removable} from $L$ at
order~$n$ if there exists some $P\in K(x)[\partial]$ with $\deg_\partial(P)=n$
and some $v,w\in K[x]$ with $\gcd(p,w)=1$ such that $PL\in K[x][\partial]$ and
$\sigma^{-n}(\lc_\partial(PL))=\frac{w}{vp}\lc_\partial(L)$.  We then call $P$ a
\emph{$p$-removing} operator for~$L$, and $PL$ the corresponding
\emph{$p$-removed} operator.  $p$~is simply called \emph{removable} from $L$ if
it is removable at order~$n$ for some $n\in\set N$.
\end{definition}

\begin{example}
\begin{enumerate}
\item In the example from the introduction, we have $L=x(1-x)D-1\in K[x][D]$.
  An $x$-removing operator is $P=\frac1xD$: we have $PL=(1-x)D^2-2D$.
  Because of $\deg_\partial(P)=1$ we say that $x$ is removable at order~1.

  If $P$ is a $p$-removing operator then so is~$QP$, for every $Q\in K[x][\partial]$
  with $\gcd(\lc_\partial(Q),\sigma^{\deg_\partial(P)+\deg_\partial(Q)}(p))=1$.
  In particular, note that the definition permits to introduce some new factors~$w$
  into the leading coefficient while $p$ is being removed.
  For instance, in our example also $\frac{2-3x}{x}D$ is an $x$-removing operator
  for~$L$.
\item The definition does not imply that the leading coefficient of a
  $p$-removed operator is coprime with (a shifted copy of)~$p$. In general, it
  only requires that the multiplicity is reduced. As an example, consider the
  operator
  \[
    L= x^2(x-2)(x-1)D^2+2x(x^2-3 x+1) D-2 \in K[x][D]
  \]
  and $p=x$. The operator $P= \frac{x^4-x^3-4 x^2+2 x-2}{(x-2) x}D-(x^2+5 x+3) \in
  K(x)[D]$ is a $p$-removing operator because the leading coefficient of
  \begin{alignat*}1
    PL&= x(x-1) (x^4-x^3-4 x^2+2 x-2) D^3\\
    &\quad{}-(x^6-4 x^5-x^4+22 x^3-18 x^2+18 x-6) D^2\\
    &\quad{}-2 (x^5-x^4-8 x^3+8 x^2-3 x+6) D\\
    &\quad{}+2 (x^2+5 x+3)
  \end{alignat*}
  contains only one copy of $p$ while there are two of them in~$L$.
  This is called \emph{partial desingularization.}
  Observe that the definition permits to remove some factors~$v$ from the leading
  coefficient in addition to~$p$.
\item In the shift case, or more generally, in an Ore algebra where $\sigma$ is
  not the identity, the leading coefficient changes when an operator is
  multiplied by a power of~$\partial$ from the left. The application of $\sigma^{-n}$
  in the definition compensates this change. As an example, consider the operator
  \begin{alignat*}1
    L &= x (x+1) (5 x-2) S^2 -2 x (5 x^2-2 x-9) S\\
    &\quad{}+(x-4) (x+2) (5 x+3) \in K[x][S]
  \end{alignat*}
  and $p=x+1$. The operator $P=\frac{5 x^3+13 x^2-18 x-24}{(x+2) (5
    x+3)}S-\frac{2(5 x^3+28 x^2+23 x-24)}{(x+2) (5 x+3)}$ is a $p$-removing
  operator because the leading coefficient of
  \begin{alignat*}1
    PL&=(x+1)(5 x^3+13 x^2-18 x-24) S^3\\
    &\quad{} -2 (x+1)(10 x^3+21 x^2-58 x+24) S^2\\
    &\quad{}  +(25 x^4+60 x^3-217 x^2-84 x+288) S\\
    &\quad{} -2 (x-4)(5 x^3+28 x^2+23 x-24)
  \end{alignat*}
  does not contain $\sigma(p)=x+2$. It is irrelevant that it contains~$x+1$.
\end{enumerate}
\end{example}

As indicated in the examples, when removing a factor~$p$ from an
operator~$L$, Def.~\ref{def:removable} allows that we introduce other
factors~$w$, coprime to~$p$.  We are also always allowed to remove additional
factors~$v$ besides~$p$.  The freedom for having $v$ and $w$ is convenient but
not really necessary. In fact, whenever there exists an operator $P\in
K(x)[\partial]$ of order~$n$ such that $\sigma^{-n}(\lc_\partial(PL))=\frac
w{vp}\lc_\partial(L)$, then there also exists an operator $Q\in
K(x)[\partial]$ of order~$n$ such that $\sigma^{-n}(\lc_\partial(QL))=\frac
1p\lc_\partial(L)$.  To see this, note that by the extended Euclidean algorithm
there exist $s,t\in K[x]$ such that $sw+tp=1$. Set $Q=\sigma^n(sv)P +
\sigma^{-n}(t)\partial^n$. Then
\begin{alignat*}1
  \sigma^{-n}(\lc_\partial(QL))
&=sv\,\sigma^{-n}(\lc_\partial(PL))+ t\lc_\partial(\partial^n L)\\
&=sv\frac{w}{vp}\lc_\partial(L) + \frac{tp}{p}\lc_\partial(L)
=\frac1p\lc_\partial(L),
\end{alignat*}
as desired. This argument is borrowed from~\cite{abramov06b}.
The same argument can also be used to show the existence of operators that remove
all the removable factors in one stroke:

\begin{lemma}\label{lemma:everything}
Let $L\in K[x][\partial]$, let $n\in\set N$, and let
$\lc_\partial(L)=p_1^{e_1}p_2^{e_2}\cdots p_m^{e_m}$ be a factorization of the
leading coefficient into irreducible polynomials.
For each $i=1,\dots,m$, let $k_i\leq e_i$ be maximal such that $p_i$ is removable from $L$ at order~$n$.
Then there exists an operator $P\in K(x)[\partial]$ of order~$n$ such that
$\sigma^{-n}(\lc_\partial(PL))=\frac1{p_1^{k_1}p_2^{k_2}\cdots
  p_m^{k_m}}\lc_\partial(L)$.
\end{lemma}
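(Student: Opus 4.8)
The plan is to follow the normalization argument given just before the lemma in two stages: first replace each individual removing operator by one whose effect on the leading coefficient is as clean as possible, and then glue these operators together with a Bézout (Chinese remainder) step.

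First I would fix notation. Write $\ell:=\lc_\partial(L)$ and $r:=\deg_\partial(L)$, and for $i=1,\dots,m$ set $q_i:=p_i^{k_i}$ and $h_i:=\prod_{j\ne i}p_j^{k_j}$, so that the polynomial we are after is $g:=\ell/\prod_j q_j=\prod_j p_j^{e_j-k_j}$ and $\ell/q_i=h_i\,g$ for each $i$; note $g\ne0$ and $h_i\ne0$ because $k_j\le e_j$ for all $j$. By the choice of $k_i$ (i.e.\ $q_i=p_i^{k_i}$ is removable from $L$ at order~$n$ in the sense of Definition~\ref{def:removable}), for each $i$ there are an operator $P_i\in K(x)[\partial]$ of order~$n$ and polynomials $v_i,w_i\in K[x]$ with $\gcd(q_i,w_i)=1$, $P_iL\in K[x][\partial]$, and $\sigma^{-n}(\lc_\partial(P_iL))=\frac{w_i}{v_iq_i}\ell$. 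The normalization argument stated just before the lemma, applied with $q_i$ in place of $p$, then produces an operator $Q_i\in K(x)[\partial]$ of order~$n$ with $Q_iL\in K[x][\partial]$ and
\[
  \sigma^{-n}\!\bigl(\lc_\partial(Q_iL)\bigr)=\frac{1}{q_i}\,\ell=h_i\,g .
\]

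Now glue the $Q_i$ together. No irreducible polynomial divides all of $h_1,\dots,h_m$: whenever $k_\ell\ge1$, the factor $p_\ell$ divides every $h_i$ with $i\ne\ell$ but not $h_\ell$. Hence $\gcd(h_1,\dots,h_m)$ is a unit, and the generalized Bézout identity yields $c_1,\dots,c_m\in K[x]$ with $\sum_i c_ih_i=1$. Put
\[
  P:=\sum_{i=1}^m\sigma^n(c_i)\,Q_i\in K(x)[\partial].
\]
Then $PL=\sum_i\sigma^n(c_i)\,Q_iL\in K[x][\partial]$ has $\partial$-degree at most $n+r$, and the coefficient of $\partial^{n+r}$ in $PL$ equals $\sum_i\sigma^n(c_i)\,\lc_\partial(Q_iL)$, whose image under $\sigma^{-n}$ is $\sum_i c_i(h_ig)=g\sum_i c_ih_i=g$. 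Since $g\ne0$, this shows $\deg_\partial(PL)=n+r$, hence $\deg_\partial(P)=n$, and
\[
  \sigma^{-n}\!\bigl(\lc_\partial(PL)\bigr)=g=\frac{1}{p_1^{k_1}p_2^{k_2}\cdots p_m^{k_m}}\,\lc_\partial(L),
\]
which is the assertion.

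All the computations here are routine; the two points deserving attention are the bookkeeping of the powers of $\sigma$ — so that left-multiplying the order-$n$ operator $Q_i$ by the scalar $\sigma^n(c_i)$ scales $\sigma^{-n}(\lc_\partial(Q_iL))$ by exactly $c_i$ — and the fact that forming the $K[x]$-linear combination $P$ could a priori make the top coefficient of $PL$ collapse; that is precisely why one computes $\sigma^{-n}$ of this coefficient and finds the nonzero polynomial $g$, rather than reasoning about $\deg_\partial(P)$ directly. I do not expect any essential obstacle beyond this bookkeeping: the content is "normalize each $P_i$, then combine by the Chinese remainder theorem," exactly as the remark preceding the lemma suggests.
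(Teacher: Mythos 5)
Your proof is correct and follows essentially the same route as the paper's: first normalize each removing operator via the remark preceding the lemma so that $\sigma^{-n}(\lc_\partial(Q_iL))=q_i^{-1}\lc_\partial(L)$, then glue the $Q_i$ together with a B\'ezout identity whose coefficients are twisted by $\sigma^n$. The only differences are cosmetic --- you use a single $m$-term identity $\sum_i c_ih_i=1$ where the paper combines coprime factors pairwise by induction on $m$, and your twist $\sigma^{n}(c_i)$ on the B\'ezout coefficients is the correct one (the paper's displayed cofactor $R=\sigma^{-n}(s)P+\sigma^{-n}(t)Q$ has the sign of the exponent reversed, which only matters when $\sigma\neq\mathrm{id}$).
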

\begin{proof}
  By the remark preceding the lemma, we may assume that for every $i$ there exists
  an operator $P_i\in K(x)[\partial]$ of order~$n$ with $P_iL\in K[x][\partial]$ and
  $\sigma^{-n}(\lc_\partial(P_iL))=p_i^{-k_i}\lc_\partial(L)$ (i.e., $w=v=1$).

  Next, observe that when $p$ and $q$ are two coprime factors of
  $\lc_\partial(L)$ which both are removable at order~$n$, then also their
  product $pq$ is removable at order~$n$.  Indeed, if $P,Q\in K(x)[\partial]$
  are such that $\deg_\partial(P)=\deg_\partial(Q)=n$, $PL,QL\in
  K[x][\partial]$, $\sigma^{-n}(\lc_\partial(PL))=\frac1p\lc_\partial(L)$, and
  $\sigma^{-n}(\lc_\partial(QL))=\frac1q\lc_\partial(L)$, and if $s,t\in K[x]$
  are such that $sq+tp=1$, then for $R:=\sigma^{-n}(s)P+\sigma^{-n}(t)Q$ we have
  $\sigma^{-n}(\lc_\partial(RL))=\frac1{pq}\lc_\partial(L)$, as desired.

  The claim of the lemma now follows by induction on~$m$, taking
  $p=p_1^{e_1}\cdots p_{m-1}^{e_m}$ and $q=p_m^{e_m}$.
\end{proof}

\section{Desingularization by Taking Least Common Left Multiples}

As outlined in Section~\ref{ssec:lclm-differential}, the classical algorithm for
desingularizing differential operators relies on taking the lclm of the operator
to be desingularized with a suitably chosen auxiliary operator. Our contribution
consists in a three-fold generalization of this approach: first, we show that it
works in every Ore algebra and not just for differential operators, second, we
show that almost every operator qualifies as an auxiliary operator in the lclm and
not just the particular operator used traditionally, and third, we show that the
approach also covers partial desingularization. From the second fact it follows
directly that taking the lclm with a random operator of appropriate order
removes, with high probability, \emph{all} the removable singularities of the
operator under consideration and not just a given one.

Consider an operator $L\in K[x][\partial]$ in an arbitrary Ore algebra, and let
$p\mid\lc_\partial(L)$ be a factor of its leading coefficient. Assume that this
factor is removable at order~$n$. Our goal is to show that for almost all
operators $A\in K[\partial]$ of order~$n$ with constant coefficients the
operator $\lclm(L,A)$ is $p$-removed.

One way of computing the least common left multiple of two operators $L,A\in
K[x][\partial]$ with $\deg_\partial(L)=r$ and $\deg_\partial(A)=n$ is as
follows. Make an ansatz with undetermined coefficients $u_0,\dots,u_n$,
$v_0,\dots,v_r$ and compare coefficients of $\partial^i$ ($i=0,\dots,n+r$) in
the equation
\[
  (u_0+\cdots+u_{n-1}\partial^{n-1}+u_n\partial^n)L =
  (v_0+\cdots+v_{r-1}\partial^{r-1}+v_r\partial^r)A.
\]
This leads to a system of homogeneous linear equations over $K(x)$ for the
undetermined coefficients, which has more variables than equations and therefore
must have a nontrivial solution. For each solution, the operator on either side
of the equation is a common left multiple of $L$ and~$A$.

For most choices of $A$ the solution space will have dimension~$1$, and in this
case, for every nontrivial solution we have $u_n\neq0$. In particular the least
common left multiple $M=\lclm(L,A)$ has then order~$r+n$. The singularities of
$M$ are then the roots of $\sigma^n(\lc_\partial(L))$ plus the roots $u_n$ minus
the common roots of $u_0,\dots,u_n$, which are cancelled out by convention. It
is not obvious at this point why removable factors should appear among the
common factors of $u_0,\dots,u_n$. To see that they systematically do, consider
a $p$-removing operator $P\in K(x)[\partial]$ of order~$n$, and observe that the
operators $1,\partial,\dots,\partial^{n-1},\partial^n$ generate the same
$K(x)$-vector space as $1,\partial,\dots,\partial^{n-1},P$. If we use the latter
basis in the ansatz for the lclm, i.e., do coefficient comparison in
\[
  (u_0+\cdots+u_{n-1}\partial^{n-1} + u_nP)L =
  (v_0+\cdots+v_{r-1}\partial^{r-1}+v_r\partial^r)A,
\]
then every nontrivial solution vector $(u_0,\dots,u_n,v_0,\dots,v_r)$ of the
resulting linear system gives rise to a common left multiple of $L$ and $A$
in $K[x][\partial]$ whose singularities are the roots of
$\lc_\partial(PL)=\sigma^n(\frac1p\lc_\partial(L))$ plus the roots of $u_n$
minus the common roots of~$u_0,\dots,u_n$.
This argument shows that the removable factor $p$ will have disappeared in the
lclm \emph{unless} it is reintroduced by~$u_n$. The main technical difficulty
to be addressed in the following is to show that this can happen only for very
special choices of~$A$. For the proof of this result we need the following
lemma.

\begin{lemma}\label{lemma:det}
  Let $n,m\in\set N$,
  let $v_1,\dots,v_n\in K^{n+m}$ be linearly independent over~$K$, and
  let $w_1,\dots,w_m\in K[x_1,\dots,x_n]^{n+m}$ be defined by
  \begin{alignat*}1
    w_1&=(x_1,\dots,x_n,1,0,\dots,0)\\
    w_2&=(0,x_1,\dots,x_n,1,0,\dots,0)\\
       &{}\vdots\\
    w_m&=(0,\dots,0,x_1,\dots,x_n,1).
  \end{alignat*}
  Then $\Delta:=\det(w_1,\dots,w_m,v_1,\dots,v_n)$ is a nonzero polynomial in $K[x_1,\dots,x_n]$.
\end{lemma}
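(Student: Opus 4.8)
The plan is to show $\Delta\neq 0$ by exhibiting a single point $(a_1,\dots,a_n)\in K^n$ (possibly after extending $K$, or using that $K$ is infinite of characteristic zero) at which the determinant does not vanish. The natural choice is to make the $w_j$ evaluate to "shifted unit vectors" so that the matrix becomes close to upper triangular. Concretely, I would first analyze the structure: the vectors $w_1,\dots,w_m$, when we substitute $x_i=a_i$, span an $m$-dimensional subspace $W$ of $K^{n+m}$, and $w_j$ has its "window" $(a_1,\dots,a_n,1)$ sitting in coordinates $j,\dots,j+n$. So $W$ is essentially the space of "polynomial-like" sliding windows. Since $v_1,\dots,v_n$ are linearly independent, it suffices to find $(a_1,\dots,a_n)$ for which $W\oplus\langle v_1,\dots,v_n\rangle=K^{n+m}$, equivalently for which no nonzero vector in $W$ lies in $\langle v_1,\dots,v_n\rangle$ — but that comparison still mixes the two spaces, so instead I would argue directly about the determinant as a polynomial.

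The cleanest route: expand $\Delta$ along the columns $v_1,\dots,v_n$ using the Cauchy–Binet / generalized Laplace expansion. Grouping the first $m$ columns together, $\Delta=\sum_{I}\pm\,\det\big((w_1,\dots,w_m)_{\bar I}\big)\cdot\det\big((v_1,\dots,v_n)_I\big)$, where $I$ ranges over $n$-element subsets of $\{1,\dots,n+m\}$, $\bar I$ its complement, and $(\,\cdot\,)_I$ denotes the submatrix on the rows indexed by $I$. The second factor $\det\big((v_1,\dots,v_n)_I\big)$ is a scalar in $K$; since the $v_j$ are linearly independent, at least one such maximal minor is nonzero — fix such an $I=\{i_1<\dots<i_n\}$. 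The first factor, $\det\big((w_1,\dots,w_m)_{\bar I}\big)$, is the $m\times m$ minor of the banded matrix $[w_1\mid\dots\mid w_m]$ on the rows $\bar I$; I claim it is a nonzero polynomial in $K[x_1,\dots,x_n]$ for this particular $I$. If I can establish that the $\Delta$-expansion has at least one term that is a nonzero polynomial and, crucially, that these polynomial terms cannot all cancel, I am done.

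To pin this down I would identify the monomial content. The banded matrix $[w_1\mid\dots\mid w_m]$ has entry $x_{k-j+1}$ in row $k$, column $j$ (with the convention $x_0=1$ for the "$1$" entries, i.e. when $k=j+n$, and $0$ outside the window). One can read off that the "diagonal-type" products in the minor on rows $\bar I$ produce distinct monomials in $x_1,\dots,x_n$ for different choices of $I$: roughly, the gaps between consecutive elements of $\bar I$ determine which $x_i$'s get multiplied. The key combinatorial observation is that the map $I\mapsto(\text{leading monomial of }\det((w_1,\dots,w_m)_{\bar I}))$ is injective on the relevant set of $I$, so the nonzero polynomial terms in the $\Delta$-expansion involve pairwise distinct monomials and therefore cannot cancel against one another. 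Hence $\Delta\neq 0$ as an element of $K[x_1,\dots,x_n]$.

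**Main obstacle.** The genuine difficulty is the last combinatorial step: verifying that the maximal minors $\det\big((w_1,\dots,w_m)_{\bar I}\big)$ of the banded "shift" matrix are, for the $I$ with nonzero $v$-minor, actually nonzero polynomials, and that the various $I$-terms do not conspire to cancel. Bounding this via distinct leading monomials requires choosing a suitable monomial order and carefully tracking which submatrix rows force which variables into the product — this is where a clean invariant (e.g. the gap sequence of $\bar I$) must be set up. An alternative that sidesteps the cancellation bookkeeping is to specialize the $x_i$ to concrete numbers that turn $[w_1\mid\dots\mid w_m]$ into a matrix in "staircase" form so that $\Delta$ reduces to a single product of entries times one nonzero $v$-minor; finding a specialization that works simultaneously for *some* nonvanishing $v$-minor is itself the crux, but it replaces a cancellation argument with an explicit-point argument, which may be easier to write down.
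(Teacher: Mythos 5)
Your reduction via the generalized Laplace expansion along the first $m$ columns is a legitimate (and genuinely different) way to attack the lemma, but as written the proof has a real gap at exactly the point you flag as the ``main obstacle'': you assert, without proof, that the minors $\det\bigl((w_1,\dots,w_m)_{\bar I}\bigr)$ are nonzero and have pairwise distinct leading monomials, and everything rests on that. Note also that the claim you actually need is slightly stronger than what you state: it is not enough that each individual minor with nonzero partner $\det\bigl((v_1,\dots,v_n)_I\bigr)$ is a nonzero polynomial; you need the family $\{\det((w_1,\dots,w_m)_{\bar I})\}_I$ to be linearly independent over $K$ (distinct leading monomials would indeed give this, but the monomial order must be chosen with care: for $n=m=2$ the minors include $x_1$ from rows $\{1,4\}$ and $x_2^2-x_1$ from rows $\{2,3\}$, whose leading monomials coincide under lex with $x_1>x_2$). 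The claim is true --- under the substitution $x_k=e_{n+1-k}(y_1,\dots,y_n)$ these minors become, via the dual Jacobi--Trudi identity, the Schur polynomials $s_{\lambda'}(y_1,\dots,y_n)$ with $\lambda$ running bijectively over partitions in the $m\times n$ box, and these are linearly independent --- but none of this is in your write-up, and it is the entire content of the lemma. Your fallback suggestion (specialize the $x_i$ to put the band matrix in staircase form) does not obviously work either: every specialization $x_i=a_i$ puts the same window $(a_1,\dots,a_n,1)$ in every column, so you cannot steer the surviving $w$-minor to match an arbitrary prescribed nonvanishing $v$-minor; e.g.\ $a=0$ forces the $v$-minor on rows $\{1,\dots,n\}$, which may vanish.

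For comparison, the paper avoids all of this combinatorics with a short double induction on $(n,m)$: expand $\Delta$ along the first row, distinguish whether all first coordinates $v_{1,j}$ vanish (reduce to $(n,m-1)$, picking up a factor $x_1$) or not (normalize by column operations, set $x_1=0$, and reduce to $(n-1,m)$). If you want to keep your Laplace-expansion route, the cleanest repair is the Schur-polynomial linear-independence argument sketched above; otherwise the inductive argument is substantially shorter.
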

\begin{proof}
Simultaneous induction on $n$ and~$m$: We show that the lemma holds for $(n,m)$
if it holds for $(n-1,m)$ and for $(n,m-1)$.

As induction basis, observe that the lemma holds for $n=1$, $m$ arbitrary, and
also for $n$ arbitrary,~$m=1$.

Now let $(n,m)\in\set N^2$ with $n\geq2,m\geq2$ be given. Let $v_1,\dots,v_n\in K^{n+m}$ be linearly
independent. Write $v_i=(v_{1,i},\dots,v_{n+m,i})$ for the coefficients.

Case~1. $v_{1,1}=v_{1,2}=\cdots=v_{1,n}=0$. In this case, the vectors $\bar v_i\in K^{n+(m-1)}$
obtained from the $v_i$ by chopping the first coordinate must be linearly independent.
By expanding along the first row, we have
\[
  \Delta=x_1\det(\bar w_2,\dots,\bar w_m,\bar v_1,\dots,\bar v_n).
\]
The determinant on the right is nonzero by applying the lemma with $n$ and $m-1$. Therefore
the determinant on the left is also nonzero.

Case~2. If at least one of the $v_{1,j}$ is nonzero, then we may assume without loss of generality
that $v_{1,1}=1$ and $v_{1,2}=v_{1,3}=\cdots=v_{1,n}=0$, by performing suitable column operations on
$(v_1,\dots,v_n)\in K^{(n+m)\times n}$. Then the vectors
$\bar v_2,\dots,\bar v_n\in K^{(n-1)+m}$ obtained from the $v_i$ by chopping the first coordinate
are linearly independent. Expanding along the first row, we now have
\[
  \Delta= x_1\, [[\textrm{poly}]] + v_{1,1}
  \begin{vmatrix}
    x_2 & x_1 & 0 & \cdots & 0 & v_{2,2} & \cdots & v_{2,n} \\
    x_3 & x_2 & \ddots & \ddots & \vdots & \vdots & &\vdots \\
    \vdots & \ddots & \ddots & \ddots & 0 & \vdots & &\vdots\\
    \vdots & & \ddots & \ddots & x_1 & \vdots & &\vdots\\
    x_n &  & & \ddots & x_2 & \vdots & &\vdots\\
    1 & \ddots &  & & x_3 & \vdots & &\vdots\\
    0 & \ddots & \ddots &  & \vdots & \vdots & &\vdots\\
    \vdots & \ddots & \ddots & \ddots  & \vdots & \vdots & &\vdots\\
    \vdots &        & \ddots & \ddots   & x_n & \vdots & &\vdots\\
    0 & \cdots &  \cdots & 0 & 1 & v_{n+m,2} & \cdots & v_{n+m,n} \\
  \end{vmatrix}.
\]
By setting $x_1=0$, the first term on the right hand side disappears, and so do
the entries $x_1$ in the determinant of the second term. By induction
hypothesis, the determinant on the right with $x_1$ set to zero is a nonzero
polynomial in $x_2,\dots,x_n$.  Since also $v_{1,1}\neq0$, the whole right hand
side is nonzero for $x_1=0$.  Consequently, when $x_1$ is not set to zero, it
cannot be the zero polynomial.
\end{proof}

\begin{theorem}[Main result]\label{thm:main}
  Let $K[x][\partial]$ be an Ore algebra, let
  $L\in K[x][\partial]$ be an operator of order~$r$, and let $n\in\set N$.
  Let $p\in K[x][\partial]$ be an irreducible polynomial
  which appears with multiplicity $e$ in~$\lc_\partial(L)$
  and let $k\leq e$ be maximal such that $p^k$ is removable from $L$ at order~$n$.
  Let $A=a_0+a_1\partial+\cdots+a_{n-1}\partial^{n-1}+\partial^n$ in
  $K[a_0,\dots,a_{n-1}][\partial]$, where $a_0,\dots,a_{n-1}$ are new
  constants, algebraically independent over~$K$.
  Then the multiplicity of $\sigma^n(p)$ in $\lc_\partial(\lclm(L,A))$ is $e-k$.
\end{theorem}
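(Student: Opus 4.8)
The plan is to set up the lclm computation using the modified basis $1,\partial,\dots,\partial^{n-1},P$ described in the text, where $P\in K(x)[\partial]$ is a $p^k$-removing operator of order~$n$, and to track precisely how high a power of $\sigma^n(p)$ can appear in the leading coefficient of the resulting common left multiple. By Lemma~\ref{lemma:everything} (applied to the single factor $p$) we may assume $P$ satisfies $\sigma^{-n}(\lc_\partial(PL)) = p^{-k}\lc_\partial(L)$ exactly, i.e.\ $\lc_\partial(PL) = \sigma^n(p)^{e-k}\cdot(\text{stuff coprime to }\sigma^n(p))$. Doing coefficient comparison of $\partial^i$, $i=0,\dots,n+r$, in
\[
  (u_0+\cdots+u_{n-1}\partial^{n-1}+u_nP)L = (v_0+\cdots+v_{r-1}\partial^{r-1}+v_r\partial^r)A
\]
gives a homogeneous linear system over $K(x)$ in the $u_j$ and $v_j$; since $A$ has generic constant coefficients, the solution space is one-dimensional and (as noted in the text) $u_n\neq 0$ in any nontrivial solution. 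The resulting common left multiple then has leading coefficient $\sigma^{n+r}(\lc_\partial(L))\cdot v_r$ on the right-hand side; clearing denominators, its singularities are the roots of $\lc_\partial(PL)=\sigma^n(p^{-k}\lc_\partial(L))$ together with the roots of $u_n$, minus the common roots of $u_0,\dots,u_n$. So the multiplicity of $\sigma^n(p)$ in $\lc_\partial(\lclm(L,A))$ equals $(e-k) + (\text{mult.\ of }\sigma^n(p)\text{ in }u_n) - (\min_j \text{mult.\ of }\sigma^n(p)\text{ in }u_j)$.

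To prove the theorem we must show two things. First, that $\sigma^n(p)$ divides $u_n$ to no higher a power than it divides some other $u_j$ — equivalently, that $\sigma^n(p)$ does not divide all of $u_0,\dots,u_{n-1}$ with positive multiplicity while also dividing $u_n$; then the two correction terms cancel and the multiplicity is exactly $e-k$. This is where Lemma~\ref{lemma:det} enters, and it is the main obstacle. The idea is to write the linear system in a form where the solution vector $(u_0,\dots,u_n)$ is computed (via Cramer's rule) as a ratio of determinants whose entries are the coefficients of $L$ (not depending on $A$) and the generic constants $a_0,\dots,a_{n-1}$; after specializing $x$ to a root of $\sigma^n(p)$, the obstruction to $\sigma^n(p)$ dividing $u_n$ too often becomes the non-vanishing of a determinant of exactly the shape in Lemma~\ref{lemma:det} — the $w_i$ encoding the structure contributed by $A=\partial^n+a_{n-1}\partial^{n-1}+\cdots$, with the $a_j$ playing the role of the $x_i$, and the $v_i$ being the $x$-independent data coming from $L$ and $P$. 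Because that determinant is a nonzero polynomial in the $a_j$, it is nonzero for generic $A$, which forces the multiplicity of $\sigma^n(p)$ in $u_n$ down to the value needed. Second, we must check the inequality cannot go the wrong way, i.e.\ that $\sigma^n(p)$ cannot appear with multiplicity $<e-k$: this follows from maximality of $k$, since removing $p$ to a higher power at order $n$ would contradict the definition of $k$ (the operator $\lclm(L,A)$ is itself a left multiple of $L$ of order $n+r$, hence arises from an order-$n$ operator over $K(x)$, so any extra cancellation of $\sigma^n(p)$ would exhibit $p^{k+1}$ as removable at order~$n$).

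So the skeleton is: (i) reduce to $w=v=1$ via Lemma~\ref{lemma:everything}; (ii) replace the basis $\{\partial^j\}$ by $\{1,\dots,\partial^{n-1},P\}$ and write out the coefficient-comparison system; (iii) argue one-dimensionality of the solution space and $u_n\neq0$ for generic $A$; (iv) express the multiplicity of $\sigma^n(p)$ in $\lclm(L,A)$ in terms of the $u_j$ as above; (v) specialize at a root of $\sigma^n(p)$ and identify the relevant minor with the determinant $\Delta$ of Lemma~\ref{lemma:det}, concluding that for generic $a_0,\dots,a_{n-1}$ that minor is nonzero so the correction terms cancel; (vi) invoke maximality of $k$ for the reverse inequality. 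Step~(v) — getting the linear system into exactly the block form $\det(w_1,\dots,w_m,v_1,\dots,v_n)$ with the $w_i$ shifted copies of $(x_1,\dots,x_n,1)$ and confirming linear independence of the $v_i$ after specialization — is the delicate bookkeeping and the heart of the proof.
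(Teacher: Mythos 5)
Your skeleton (Cramer-rule minor for $u_n$, Lemma~\ref{lemma:det} over $F=K[x]/\<\sigma^n(p)>$, maximality of $k$ for the reverse inequality) is the right family of ideas, but the step you yourself call the heart of the proof, (v), does not go through with the basis you chose. With the ansatz in the basis $1,\partial,\dots,\partial^{n-1},P$, the minor you must control is $u_n=\det\bigl([L],[\partial L],\dots,[\partial^{n-1}L],[A],[\partial A],\dots\bigr)$, and Lemma~\ref{lemma:det} applies only if the columns $[L],[\partial L],\dots,[\partial^{n-1}L]$ are linearly independent modulo $\sigma^n(p)$. That hypothesis fails in exactly the situations the theorem is meant to cover: whenever $p$ (or the appropriate $\sigma$-shift of it) is removable already at some order $d<n$, there is a nontrivial relation $c_0L+\cdots+c_{n-1}\partial^{n-1}L\equiv0\pmod{\sigma^n(p)}$ with the $c_i$ of degree $<\deg(p)$, so $\sigma^n(p)$ genuinely divides your $u_n$ for \emph{every} choice of $A$. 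For instance, for $L=x^3D^3-3x^2D^2-2xD+10$ with $p=x$ and $n=4$, removability of one copy of $x$ at order $2$ forces such a dependence among $[L],[DL],[D^2L]$. In these cases Lemma~\ref{lemma:det}, which is a nonvanishing statement and says nothing about multiplicities, cannot ``force the multiplicity of $\sigma^n(p)$ in $u_n$ down''; what remains is precisely the unproved claim that the extra powers of $\sigma^n(p)$ in $u_n$ are matched by the content of the whole combination (and note the content need not equal $\gcd(u_0,\dots,u_n)$, so the bookkeeping formula in your step (iv) is itself only an inequality).

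The paper's proof avoids this by a different choice of ansatz basis, and that choice is the missing idea: instead of $1,\partial,\dots,\partial^{n-1},P$ it uses $P_0,P_1,\dots,P_n$, where each $P_i$ has order $i$ and removes \emph{everything} that is removable at order $i$ (such operators exist by Lemma~\ref{lemma:everything}). Then $u_n=\det\bigl([P_0L],\dots,[P_{n-1}L],[A],[\partial A],\dots\bigr)$, and if $\sigma^n(p)\mid u_n$, Lemma~\ref{lemma:det} over $F$ forces $[P_0L],\dots,[P_{n-1}L]$ to be dependent modulo $\sigma^n(p)$; dividing the resulting combination by $\sigma^n(p)$ yields an operator of some order $d<n$ that removes one more copy of $\sigma^{n-d}(p)$ than $P_d$ does, contradicting the maximality built into $P_d$. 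So $\sigma^n(p)\nmid u_n$ outright, and no multiplicity-versus-content cancellation is needed. Two smaller points: your reverse-inequality argument needs a word about constants, since the left cofactor of $\lclm(L,A)$ has coefficients in $K(a_0,\dots,a_{n-1})(x)$ rather than $K(x)$, so you must pass from removability over the extended constant field back to $K$; and the one-dimensionality of the solution space together with $u_n\neq0$ is itself most cleanly obtained as a byproduct of the nonvanishing of the same determinant rather than asserted beforehand.
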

\begin{proof}
  Let $P_0,\dots,P_n\in K(x)[\partial]$ be such that each $P_i$ has order~$i$
  and removes from $L$ all the factors of $\lc_\partial(L)$ that can possibly be removed
  by an operator of order~$i$. Such operators exist by
  Lemma~\ref{lemma:everything}. Consider an ansatz
\[
  u_0P_0L+u_1P_1L+\cdots+u_nP_nL
= v_0A + v_1\partial A + \cdots + v_r\partial^r A
\]
with unknown $u_i,v_j\in K[a_0,\dots,a_{n-1}][x]$. Compare coefficients with respect to
powers of $\partial$ on both sides and solve the resulting linear system. This gives
a polynomial solution vector with
\[
  u_n = \det\bigl(
    [P_0L], [P_1L], \cdots [P_{n-1}L], [A], [\partial A],  \cdots,  [\partial^{r-1}A]\bigr),
\]
where the notation $[U]$ refers to the coefficient vector of the operator~$U$ (padded with
zeros, if necessary, to dimension $r+n$).

If $\sigma^n(p)\mid u_n$, then the columns of the determinant are linearly dependent when
viewed as elements of $F[a_0,\dots,a_{n-1}]$ with $F=K[x]/\<\sigma^n(p)>$. Then Lemma~\ref{lemma:det}
with $F$ in place of $K$ implies that already $[P_0L],\dots,[P_{n-1}L]$ are
linearly dependent modulo~$\sigma^n(p)$. In other words, there are polynomials
$u_0,\dots,u_{n-1}\in K[x]$ of degree $<\deg(p)$, not all zero, such that the
linear combination $u_0P_0L + \cdots + u_{n-1}P_{n-1}L$ has content~$\sigma^n(p)$.  If $d$ is
maximal such that $u_d\neq0$, then this means that $\frac1{\sigma^d(p)}(u_0P_0+\cdots+u_dP_d)$ is an
operator of order~$d$ which removes from $L$ one factor $\sigma^{n-d}(p)$ more
than $P_d$ does, in contradiction to the assumption that $P_d$ removes as much
as possible.
\end{proof}

The theorem continues to hold when the indeterminates $a_0,\dots,a_{n-1}$ are
replaced by values in $K$ which do not form a point on the zero set of the
determinant polynomial $u_n\bmod\sigma^n(p)$, as discussed in the proof. As this
is not the zero polynomial and we assume throughout that $K$ has characteristic
zero, it follows that almost all choices of $A\in K[\partial]$
will successfully remove all the factors of $\lc_\partial(L)$ that are removable
at order~$\deg_\partial(A)$.

The theorem thus gives rise to the following very simple probabilistic algorithm
for removing, with high probability, as many factors as possible from a given
operator $L\in K[x][\partial]$ at a given order~$n$:
\begin{itemize}
\item Pick an operator $A\in K[\partial]$ of order~$n$ at random.
\item Return $\lclm(L,A)$.
\end{itemize}
This is a Monte Carlo algorithm: it always terminates but with low probability
may return an incorrect answer. For a Las Vegas algorithm (low probability of
not terminating but every answer is guaranteed to be correct), inspect the
multiplier $u_n$ which appears during the construction of the lclm: if it is
coprime with $\sigma^n(\lc_\partial(L))$, then no removed singularities get
mistakenly re-introduced and the result is therefore correct. Otherwise, try
again. For a deterministic algorithm, don't take the operators $A$ at random but
from some enumeration of $K[\partial]$ which is chosen in such a way that the
Zariski closure of the set of the corresponding coefficient vectors is all
of~$K^n$.

The Monte Carlo version of the algorithm is included in the new
\verb|ore_algebra| package for Sage~\cite{kauers14}, and works very efficiently
thanks to the efficient implementation of least common left multiples also
available in this package.  This package has been used for the calculations in
the following concluding examples. The computation time for all these examples
is negligible.

\begin{example}
\begin{enumerate}
\item For $L\in\set Q[x][D]$ from Example~\ref{ex:diff} and the ``randomly chosen'' operator
  $A=D^2+D+1$ we have
  \begin{alignat*}1
    \lclm(L,A)&=(x^7-4 x^6+6 x^5-4 x^4+x^3+6 x-6) D^4 \\
              &-(2 x^6-9 x^5+15 x^4-11 x^3+3 x^2-24)D^3\\
              &-(x^7-4 x^6+6 x^5-4 x^4+x^3+6 x-6) D\\
              &+(2 x^6-9 x^5+15 x^4-11 x^3+3 x^2-24).
  \end{alignat*}
  This is not the same result as in Example~\ref{ex:diff}, but it does have the required
  property $x\nmid\lc_\partial(\lclm(L,A))$.
\item This is an example for the recurrence case. Let
  \begin{alignat*}1
\qquad L &= 2 (x+3)^2 (59 x+94) S^3
-(2301 x^3+15171 x^2+32696 x+22876) S^2\\
&\quad{}-5 (59 x^3+330 x^2+600 x+359) S
-(59 x+153) (x+1)^2.
  \end{alignat*}
  Among the factors of $(x+3)$ and $(59x+94)$ of the leading coefficient, the latter is
  removable at order~1 and the former is not removable.
  Accordingly, for the ``randomly chosen'' operator $A=S-2$ we have
  \begin{alignat*}1
   \qquad\lclm(L,A) &= 2  (x+4)^2 (8909 x^3+57087 x^2+119629 x+81711) S^4 \\
     &+ ({\cdots}) S^3 + ({\cdots}) S^2 + ({\cdots}) S + ({\cdots}),
  \end{alignat*}
  where $({\cdots})$ stands for some other polynomials.
  Note that the leading coefficient is coprime to $\sigma(59x+94)=59 x+153$.
\item As an example for partial desingularization, consider the operator
  $L=x^3D^3 - 3x^2D^2-2xD+10\in\set Q[x][D]$. Of the three copies of~$x$ in the leading coefficient,
  one is removable at order~2, another one at order~4, and the third is not removable.
  In perfect accordance, we find for example
  \begin{alignat*}1
    \qquad
    &\lc_\partial(\lclm(L, D+2))=x^3(4x^3+6x^2-2x-5),\\
    &\lc_\partial(\lclm(L, D^2+1))=x^2(x^6+10x^4+40x^2+80),\\
    &\lc_\partial(\lclm(L, D^3+3D^2-1))=x^2(x^8-30x^6+\cdots+2160x+1920),\\
    &\lc_\partial(\lclm(L, D^4-D^2+1))=x(x^{10}-10x^8+120x^6-720x^4-3200),\\
    &\lc_\partial(\lclm(L, D^5+D-1))=x(x^{12}-3x^{11}+\cdots+25600x-22400).
  \end{alignat*}
\item There are unlucky choices for~$A$. For example, consider
  \begin{alignat*}1
   L &= (x-7)(x^2-2x-12)S^2 - (3x^3-23x^2-23x+291)S\\
   &\quad{}+ 2(x-6)(x^2-13)\in\set Q[x][S].
  \end{alignat*}
  The factor $x-7$ is removable, as can be seen, for example, from
  the fact that $\lc_\partial(\lclm(L,S-1))=2 x^2-x-51$ is coprime to $\sigma(x-7)=x-6$.
  However, if we take $A=S-\tfrac94$, then
 \begin{alignat*}1
   &\lclm(L,A)=
  4 (x-7) (x-6) (5 x-28) S^3\\
 &\qquad{} - (x-7) (3092 - 1138 x + 105 x^2) S^2\\
 &\qquad{} + (x-5) (6081 - 2080 x + 175 x^2) S\\
 &\qquad{} - 18 (x-6) (x-5) (5 x-23),
 \end{alignat*}
 which has $x-6$ in the leading coefficient. (It is irrelevant that also $x-7$ appears as
 a factor.)
\item Finally, as an example with an unusual Ore algebra,
  consider $\set Q[x][\partial]$ with $\sigma\colon\set Q[x]\to\set Q[x]$
  defined by $\sigma(x)=x^2$ and $\delta\colon\set Q[x]\to\set Q[x]$ defined by $\delta(x)=1-x$.
  Let
  \begin{alignat*}1
    L &= (2x+1)\partial^2 + (x^2+3x-1)\partial - (2x^4+2x^3+x^2+1).
  \end{alignat*}
  The factor $2x+1$ is removable at order~$1$. For example, for $A=\partial-1$ we find
  that $\lclm(L,A)$ equals
  \begin{alignat*}1
\qquad&(2x^3 + 4x^2 + 4x - 1)\partial^3 - (2x^6 - x^4 - 4x^3 - 3x^2 + x + 5)\partial^2 \\
&\quad{}- (2x^9 + 4x^8 + 6x^7 + 4x^6 + 2x^5 + 3x^4 + 2x^3 + 3x^2 + 3x - 2)\partial\\
&\quad{}+ (2x^9 + 4x^8 + 6x^7 + 6x^6 + 2x^5 + 2x^4 - 4x^3 - 4x^2 + 4).
  \end{alignat*}
  As expected, the leading coefficient does not contain the factor $\sigma(\lc_\partial(L))=2x^2+1$.
\end{enumerate}
\end{example}

\bibliographystyle{plain}
\bibliography{bib}

\end{document}